\documentclass[conference,10pt,twoside,twocolumn]{IEEEtran}

\usepackage{amsmath,graphicx}

\usepackage[T1]{fontenc}
\usepackage{graphicx}
\usepackage{amssymb}
\usepackage{amsmath}
\usepackage{amsthm}
\usepackage{subfigure}
\usepackage{booktabs} 
\usepackage{multirow}
\usepackage{microtype}
\usepackage{cite}
\usepackage{subfigure}
\usepackage{xcolor} 
\usepackage{url}
\usepackage{balance}
\usepackage{colortbl}

\usepackage{amssymb}
\usepackage{amsfonts}
\usepackage{mathrsfs}
\usepackage{xspace}
\usepackage{bm}
\usepackage{upgreek}

\newcommand{\safemath}[2]{\newcommand{#1}{\ensuremath{#2}\xspace}}

\safemath{\bma}{\mathbf{a}}
\safemath{\bmb}{\mathbf{b}}
\safemath{\bmc}{\mathbf{c}}
\safemath{\bmd}{\mathbf{d}}
\safemath{\bme}{\mathbf{e}}
\safemath{\bmf}{\mathbf{f}}
\safemath{\bmg}{\mathbf{g}}
\safemath{\bmh}{\mathbf{h}}
\safemath{\bmi}{\mathbf{i}}
\safemath{\bmj}{\mathbf{j}}
\safemath{\bmk}{\mathbf{k}}
\safemath{\bml}{\mathbf{l}}
\safemath{\bmm}{\mathbf{m}}
\safemath{\bmn}{\mathbf{n}}
\safemath{\bmo}{\mathbf{o}}
\safemath{\bmp}{\mathbf{p}}
\safemath{\bmq}{\mathbf{q}}
\safemath{\bmr}{\mathbf{r}}
\safemath{\bms}{\mathbf{s}}
\safemath{\bmt}{\mathbf{t}}
\safemath{\bmu}{\mathbf{u}}
\safemath{\bmv}{\mathbf{v}}
\safemath{\bmw}{\mathbf{w}}
\safemath{\bmx}{\mathbf{x}}
\safemath{\bmy}{\mathbf{y}}
\safemath{\bmz}{\mathbf{z}}
\safemath{\bmzero}{\mathbf{0}}
\safemath{\bmone}{\mathbf{1}}
\safemath{\Bell}{\ensuremath{\boldsymbol\ell}}

\bmdefine{\biad}{a}
\bmdefine{\bibd}{b}
\bmdefine{\bicd}{c}
\bmdefine{\bidd}{d}
\bmdefine{\bied}{e}
\bmdefine{\bifd}{f}
\bmdefine{\bigd}{g}
\bmdefine{\bihd}{h}
\bmdefine{\biid}{i}
\bmdefine{\bijd}{j}
\bmdefine{\bikd}{k}
\bmdefine{\bild}{l}
\bmdefine{\bimd}{m}
\bmdefine{\bind}{n}
\bmdefine{\biod}{o}
\bmdefine{\bipd}{p}
\bmdefine{\biqd}{q}
\bmdefine{\bird}{r}
\bmdefine{\bisd}{s}
\bmdefine{\bitd}{t}
\bmdefine{\biud}{u}
\bmdefine{\bivd}{v}
\bmdefine{\biwd}{w}
\bmdefine{\bixd}{x}
\bmdefine{\biyd}{y}
\bmdefine{\bizd}{z}

\bmdefine{\bixid}{\xi}
\bmdefine{\bilambdad}{\lambda}
\bmdefine{\bimud}{\mu}
\bmdefine{\bithetad}{\theta}
\bmdefine{\biphid}{\phi}
\bmdefine{\bideltad}{\delta}

\safemath{\bmia}{\biad}
\safemath{\bmib}{\bibd}
\safemath{\bmic}{\bicd}
\safemath{\bmid}{\bidd}
\safemath{\bmie}{\bied}
\safemath{\bmif}{\bifd}
\safemath{\bmig}{\bigd}
\safemath{\bmih}{\bihd}
\safemath{\bmii}{\biid}
\safemath{\bmij}{\bijd}
\safemath{\bmik}{\bikd}
\safemath{\bmil}{\bild}
\safemath{\bmim}{\bimd}
\safemath{\bmin}{\bind}
\safemath{\bmio}{\biod}
\safemath{\bmip}{\bipd}
\safemath{\bmiq}{\biqd}
\safemath{\bmir}{\bird}
\safemath{\bmis}{\bisd}
\safemath{\bmit}{\bitd}
\safemath{\bmiu}{\biud}
\safemath{\bmiv}{\bivd}
\safemath{\bmiw}{\biwd}
\safemath{\bmix}{\bixd}
\safemath{\bmiy}{\biyd}
\safemath{\bmiz}{\bizd}

\safemath{\bmxi}{\bixid}
\safemath{\bmlambda}{\bilambdad}
\safemath{\bmmu}{\bimud}
\safemath{\bmtheta}{\bithetad}
\safemath{\bmphi}{\biphid}
\safemath{\bmdelta}{\bideltad}

\safemath{\bA}{\mathbf{A}}
\safemath{\bB}{\mathbf{B}}
\safemath{\bC}{\mathbf{C}}
\safemath{\bD}{\mathbf{D}}
\safemath{\bE}{\mathbf{E}}
\safemath{\bF}{\mathbf{F}}
\safemath{\bG}{\mathbf{G}}
\safemath{\bH}{\mathbf{H}}
\safemath{\bI}{\mathbf{I}}
\safemath{\bJ}{\mathbf{J}}
\safemath{\bK}{\mathbf{K}}
\safemath{\bL}{\mathbf{L}}
\safemath{\bM}{\mathbf{M}}
\safemath{\bN}{\mathbf{N}}
\safemath{\bO}{\mathbf{O}}
\safemath{\bP}{\mathbf{P}}
\safemath{\bQ}{\mathbf{Q}}
\safemath{\bR}{\mathbf{R}}
\safemath{\bS}{\mathbf{S}}
\safemath{\bT}{\mathbf{T}}
\safemath{\bU}{\mathbf{U}}
\safemath{\bV}{\mathbf{V}}
\safemath{\bW}{\mathbf{W}}
\safemath{\bX}{\mathbf{X}}
\safemath{\bY}{\mathbf{Y}}
\safemath{\bZ}{\mathbf{Z}}

\safemath{\bZero}{\mathbf{0}}
\safemath{\bOne}{\mathbf{1}}
\safemath{\bDelta}{\mathbf{\Delta}}
\safemath{\bLambda}{\mathbf{\UpLambda}}
\safemath{\bPhi}{\mathbf{\Upphi}}
\safemath{\bSigma}{\mathbf{\Upsigma}}
\safemath{\bOmega}{\mathbf{\Upomega}}
\safemath{\bTheta}{\mathbf{\Uptheta}}

\bmdefine{\biAd}{A}
\bmdefine{\biBd}{B}
\bmdefine{\biCd}{C}
\bmdefine{\biDd}{D}
\bmdefine{\biEd}{E}
\bmdefine{\biFd}{F}
\bmdefine{\biGd}{G}
\bmdefine{\biHd}{H}
\bmdefine{\biId}{I}
\bmdefine{\biJd}{J}
\bmdefine{\biKd}{K}
\bmdefine{\biLd}{L}
\bmdefine{\biMd}{M}
\bmdefine{\biOd}{N}
\bmdefine{\biPd}{O}
\bmdefine{\biQd}{P}
\bmdefine{\biRd}{R}
\bmdefine{\biSd}{S}
\bmdefine{\biTd}{T}
\bmdefine{\biUd}{U}
\bmdefine{\biVd}{V}
\bmdefine{\biWd}{W}
\bmdefine{\biXd}{X}
\bmdefine{\biYd}{Y}
\bmdefine{\biZd}{Z}

\bmdefine{\biDelta}{\Delta}
\bmdefine{\biLambda}{\Lambda}
\bmdefine{\biPhi}{\Phi}
\bmdefine{\biSigma}{\Sigma}
\bmdefine{\biOmega}{\Omega}
\bmdefine{\biTheta}{\Theta}

\safemath{\bimA}{\biAd}
\safemath{\bimB}{\biBd}
\safemath{\bimC}{\biCd}
\safemath{\bimD}{\biDd}
\safemath{\bimE}{\biEd}
\safemath{\bimF}{\biFd}
\safemath{\bimG}{\biGd}
\safemath{\bimH}{\biHd}
\safemath{\bimI}{\biId}
\safemath{\bimJ}{\biJd}
\safemath{\bimK}{\biKd}
\safemath{\bimL}{\biLd}
\safemath{\bimM}{\biMd}
\safemath{\bimN}{\biNd}
\safemath{\bimO}{\biOd}
\safemath{\bimP}{\biPd}
\safemath{\bimQ}{\biQd}
\safemath{\bimR}{\biRd}
\safemath{\bimS}{\biSd}
\safemath{\bimT}{\biTd}
\safemath{\bimU}{\biUd}
\safemath{\bimV}{\biVd}
\safemath{\bimW}{\biWd}
\safemath{\bimX}{\biXd}
\safemath{\bimY}{\biYd}
\safemath{\bimZ}{\biZd}

\safemath{\bimDelta}{\biDelta}
\safemath{\bimLambda}{\biLambda}
\safemath{\bimPhi}{\biPhi}
\safemath{\bimSigma}{\biSigma}
\safemath{\bimOmega}{\biOmega}
\safemath{\bimTheta}{\biTheta}

\safemath{\setA}{\mathcal{A}}
\safemath{\setB}{\mathcal{B}}
\safemath{\setC}{\mathcal{C}}
\safemath{\setD}{\mathcal{D}}
\safemath{\setE}{\mathcal{E}}
\safemath{\setF}{\mathcal{F}}
\safemath{\setG}{\mathcal{G}}
\safemath{\setH}{\mathcal{H}}
\safemath{\setI}{\mathcal{I}}
\safemath{\setJ}{\mathcal{J}}
\safemath{\setK}{\mathcal{K}}
\safemath{\setL}{\mathcal{L}}
\safemath{\setM}{\mathcal{M}}
\safemath{\setN}{\mathcal{N}}
\safemath{\setO}{\mathcal{O}}
\safemath{\setP}{\mathcal{P}}
\safemath{\setQ}{\mathcal{Q}}
\safemath{\setR}{\mathcal{R}}
\safemath{\setS}{\mathcal{S}}
\safemath{\setT}{\mathcal{T}}
\safemath{\setU}{\mathcal{U}}
\safemath{\setV}{\mathcal{V}}
\safemath{\setW}{\mathcal{W}}
\safemath{\setX}{\mathcal{X}}
\safemath{\setY}{\mathcal{Y}}
\safemath{\setZ}{\mathcal{Z}}
\safemath{\emptySet}{\varnothing}

\safemath{\colA}{\mathscr{A}}
\safemath{\colB}{\mathscr{B}}
\safemath{\colC}{\mathscr{C}}
\safemath{\colD}{\mathscr{D}}
\safemath{\colE}{\mathscr{E}}
\safemath{\colF}{\mathscr{F}}
\safemath{\colG}{\mathscr{G}}
\safemath{\colH}{\mathscr{H}}
\safemath{\colI}{\mathscr{I}}
\safemath{\colJ}{\mathscr{J}}
\safemath{\colK}{\mathscr{K}}
\safemath{\colL}{\mathscr{L}}
\safemath{\colM}{\mathscr{M}}
\safemath{\colN}{\mathscr{N}}
\safemath{\colO}{\mathscr{O}}
\safemath{\colP}{\mathscr{P}}
\safemath{\colQ}{\mathscr{Q}}
\safemath{\colR}{\mathscr{R}}
\safemath{\colS}{\mathscr{S}}
\safemath{\colT}{\mathscr{T}}
\safemath{\colU}{\mathscr{U}}
\safemath{\colV}{\mathscr{V}}
\safemath{\colW}{\mathscr{W}}
\safemath{\colX}{\mathscr{X}}
\safemath{\colY}{\mathscr{Y}}
\safemath{\colZ}{\mathscr{Z}}

\safemath{\opA}{\mathbb{A}}
\safemath{\opB}{\mathbb{B}}
\safemath{\opC}{\mathbb{C}}
\safemath{\opD}{\mathbb{D}}
\safemath{\opE}{\mathbb{E}}
\safemath{\opF}{\mathbb{F}}
\safemath{\opG}{\mathbb{G}}
\safemath{\opH}{\mathbb{H}}
\safemath{\opI}{\mathbb{I}}
\safemath{\opJ}{\mathbb{J}}
\safemath{\opK}{\mathbb{K}}
\safemath{\opL}{\mathbb{L}}
\safemath{\opM}{\mathbb{M}}
\safemath{\opN}{\mathbb{N}}
\safemath{\opO}{\mathbb{O}}
\safemath{\opP}{\mathbb{P}}
\safemath{\opQ}{\mathbb{Q}}
\safemath{\opR}{\mathbb{R}}
\safemath{\opS}{\mathbb{S}}
\safemath{\opT}{\mathbb{T}}
\safemath{\opU}{\mathbb{U}}
\safemath{\opV}{\mathbb{V}}
\safemath{\opW}{\mathbb{W}}
\safemath{\opX}{\mathbb{X}}
\safemath{\opY}{\mathbb{Y}}
\safemath{\opZ}{\mathbb{Z}}
\safemath{\opZero}{\mathbb{O}}
\safemath{\identityop}{\opI}

\safemath{\veca}{\bma}
\safemath{\vecb}{\bmb}
\safemath{\vecc}{\bmc}
\safemath{\vecd}{\bmd}
\safemath{\vece}{\bme}
\safemath{\vecf}{\bmf}
\safemath{\vecg}{\bmg}
\safemath{\vech}{\bmh}
\safemath{\veci}{\bmi}
\safemath{\vecj}{\bmj}
\safemath{\veck}{\bmk}
\safemath{\vecl}{\bml}
\safemath{\vecm}{\bmm}
\safemath{\vecn}{\bmn}
\safemath{\veco}{\bmo}
\safemath{\vecp}{\bmp}
\safemath{\vecq}{\bmq}
\safemath{\vecr}{\bmr}
\safemath{\vecs}{\bms}
\safemath{\vect}{\bmt}
\safemath{\vecu}{\bmu}
\safemath{\vecv}{\bmv}
\safemath{\vecw}{\bmw}
\safemath{\vecx}{\bmx}
\safemath{\vecy}{\bmy}
\safemath{\vecz}{\bmz}

\safemath{\veczero}{\bmzero}
\safemath{\vecone}{\bmone}
\safemath{\vecxi}{\bmxi}
\safemath{\veclambda}{\bmlambda}
\safemath{\vecmu}{\bmmu}
\safemath{\vectheta}{\bmtheta}
\safemath{\vecphi}{\bmphi}
\safemath{\vecdelta}{\bmdelta}

\safemath{\matA}{\bA}
\safemath{\matB}{\bB}
\safemath{\matC}{\bC}
\safemath{\matD}{\bD}
\safemath{\matE}{\bE}
\safemath{\matF}{\bF}
\safemath{\matG}{\bG}
\safemath{\matH}{\bH}
\safemath{\matI}{\bI}
\safemath{\matJ}{\bJ}
\safemath{\matK}{\bK}
\safemath{\matL}{\bL}
\safemath{\matM}{\bM}
\safemath{\matN}{\bN}
\safemath{\matO}{\bO}
\safemath{\matP}{\bP}
\safemath{\matQ}{\bQ}
\safemath{\matR}{\bR}
\safemath{\matS}{\bS}
\safemath{\matT}{\bT}
\safemath{\matU}{\bU}
\safemath{\matV}{\bV}
\safemath{\matW}{\bW}
\safemath{\matX}{\bX}
\safemath{\matY}{\bY}
\safemath{\matZ}{\bZ}
\safemath{\matzero}{\bmzero}

\safemath{\matDelta}{\bDelta}
\safemath{\matLambda}{\bLambda}
\safemath{\matPhi}{\bPhi}
\safemath{\matSigma}{\bSigma}
\safemath{\matOmega}{\bOmega}
\safemath{\matTheta}{\bTheta}

\safemath{\matidentity}{\matI}
\safemath{\matone}{\matO}

\safemath{\rnda}{A}
\safemath{\rndb}{B}
\safemath{\rndc}{C}
\safemath{\rndd}{D}
\safemath{\rnde}{E}
\safemath{\rndf}{F}
\safemath{\rndg}{G}
\safemath{\rndh}{H}
\safemath{\rndi}{I}
\safemath{\rndj}{J}
\safemath{\rndk}{K}
\safemath{\rndl}{L}
\safemath{\rndm}{M}
\safemath{\rndn}{N}
\safemath{\rndo}{O}
\safemath{\rndp}{P}
\safemath{\rndq}{Q}
\safemath{\rndr}{R}
\safemath{\rnds}{S}
\safemath{\rndt}{T}
\safemath{\rndu}{U}
\safemath{\rndv}{V}
\safemath{\rndw}{W}
\safemath{\rndx}{X}
\safemath{\rndy}{Y}
\safemath{\rndz}{Z}

\safemath{\rveca}{\bimA}
\safemath{\rvecb}{\bimB}
\safemath{\rvecc}{\bimC}
\safemath{\rvecd}{\bimD}
\safemath{\rvece}{\bimE}
\safemath{\rvecf}{\bimF}
\safemath{\rvecg}{\bimG}
\safemath{\rvech}{\bimH}
\safemath{\rveci}{\bimI}
\safemath{\rvecj}{\bimJ}
\safemath{\rveck}{\bimK}
\safemath{\rvecl}{\bimL}
\safemath{\rvecm}{\bimM}
\safemath{\rvecn}{\bimN}
\safemath{\rveco}{\bomO}
\safemath{\rvecp}{\bimP}
\safemath{\rvecq}{\bimQ}
\safemath{\rvecr}{\bimR}
\safemath{\rvecs}{\bimS}
\safemath{\rvect}{\bimT}
\safemath{\rvecu}{\bimU}
\safemath{\rvecv}{\bimV}
\safemath{\rvecw}{\bimW}
\safemath{\rvecx}{\bimX}
\safemath{\rvecy}{\bimY}
\safemath{\rvecz}{\bimZ}

\safemath{\rvecxi}{\bmxi}
\safemath{\rveclambda}{\bmlambda}
\safemath{\rvecmu}{\bmmu}
\safemath{\rvectheta}{\bmtheta}
\safemath{\rvecphi}{\bmphi}

\safemath{\rmatA}{\bimA}
\safemath{\rmatB}{\bimB}
\safemath{\rmatC}{\bimC}
\safemath{\rmatD}{\bimD}
\safemath{\rmatE}{\bimE}
\safemath{\rmatF}{\bimF}
\safemath{\rmatG}{\bimG}
\safemath{\rmatH}{\bimH}
\safemath{\rmatI}{\bimI}
\safemath{\rmatJ}{\bimJ}
\safemath{\rmatK}{\bimK}
\safemath{\rmatL}{\bimL}
\safemath{\rmatM}{\bimM}
\safemath{\rmatN}{\bimN}
\safemath{\rmatO}{\bimO}
\safemath{\rmatP}{\bimP}
\safemath{\rmatQ}{\bimQ}
\safemath{\rmatR}{\bimR}
\safemath{\rmatS}{\bimS}
\safemath{\rmatT}{\bimT}
\safemath{\rmatU}{\bimU}
\safemath{\rmatV}{\bimV}
\safemath{\rmatW}{\bimW}
\safemath{\rmatX}{\bimX}
\safemath{\rmatY}{\bimY}
\safemath{\rmatZ}{\bimZ}

\safemath{\rmatDelta}{\bimDelta}
\safemath{\rmatLambda}{\bimLambda}
\safemath{\rmatPhi}{\bimPhi}
\safemath{\rmatSigma}{\bimSigma}
\safemath{\rmatOmega}{\bimOmega}
\safemath{\rmatTheta}{\bimTheta}

\usepackage{amssymb}
\usepackage{amsfonts}
\usepackage{mathrsfs}
\usepackage{xspace}
\usepackage{bm}
\usepackage{fancyref}
\usepackage{textcomp}

\usepackage{multirow}
\usepackage{stmaryrd}

\newenvironment{textbmatrix}{	\setlength{\arraycolsep}{2.5pt}%
								\left[\begin{matrix}}{\end{matrix}\right]%
								\raisebox{0.08ex}{\vphantom{M}}}

\def\be{\begin{equation}}
\def\ee{\end{equation}}
\def\een{\nonumber \end{equation}}
\def\mat{\begin{bmatrix}}
\def\emat{\end{bmatrix}}
\def\btm{\begin{textbmatrix}}
\def\etm{\end{textbmatrix}}

\def\ba#1\ea{\begin{align}#1\end{align}}
\def\bas#1\eas{\begin{align*}#1\end{align*}}
\def\bs#1\es{\begin{split}#1\end{split}}
\def\bg#1\eg{\begin{gather}#1\end{gather}}
\def\bml#1\eml{\begin{multline}#1\end{multline}}
\def\bi#1\ei{\begin{itemize}#1\end{itemize}}

\newcommand{\lefto}{\mathopen{}\left}

\DeclareMathOperator{\Exop}{\opE}			
\DeclareMathOperator{\Varop}{\opV\mkern-1mu\mathrm{ar}}
\DeclareMathOperator{\conv}{\star}			

\newcommand{\Ex}[1]{\ensuremath{\Exop\lefto[#1\right]}} 	
\newcommand{\Var}[1]{\ensuremath{\Varop\lefto[#1\right]}} 

\newcommand{\vecnorm}[1]{\lefto\lVert#1\right\rVert}		
\newcommand{\frobnorm}[1]{\vecnorm{#1}_{\text{F}}}	

\safemath{\dirac}{\delta}					
\safemath{\krond}{\dirac}					

\safemath{\upto}{\uparrow}
\safemath{\downto}{\downarrow}
\safemath{\iu}{j}							
\safemath{\ev}{\lambda}						
\safemath{\hilseqspace}{l^{2}}				
\newcommand{\banachfunspace}[1]{\setL^{#1}}	
\safemath{\hilfunspace}{\banachfunspace{2}}	

\safemath{\SNR}{\textit{SNR}} 				
\safemath{\PAR}{\textit{PAR}} 				
\safemath{\No}{N_0}							
\safemath{\Es}{E_s}							
\safemath{\Eb}{E_b}							
\safemath{\EbNo}{\frac{\Eb}{\No}}
\safemath{\EsNo}{\frac{\Es}{\No}}

\DeclareMathOperator{\CHop}{\ensuremath{\opH}} 
\safemath{\tvir}{\rndh_{\CHop}}				
\safemath{\tvtf}{\rndl_{\CHop}}				
\safemath{\spf}{\rnds_{\CHop}}				
\safemath{\bff}{H_{\CHop}}					

\safemath{\ircf}{r_{h}}						
\safemath{\tftvcf}{r_{s}}					
\safemath{\tfcf}{r_{l}}						
\safemath{\bfcf}{r_{H}}						

\safemath{\tcorr}{c_h}						
\safemath{\scf}{c_{s}}						
\safemath{\tfcorr}{c_{l}}					
\safemath{\fcorr}{c_{H}}						

\safemath{\mi}{I}							
\safemath{\capacity}{C}						

\safemath{\normal}{\mathcal{N}}			
\safemath{\jpg}{\mathcal{CN}}			
\safemath{\mchain}{\leftrightarrow}		

\safemath{\dB}{\,\mathrm{dB}}
\safemath{\dBm}{\,\mathrm{dBm}}
\safemath{\Hz}{\,\mathrm{Hz}}
\safemath{\kHz}{\,\mathrm{kHz}}
\safemath{\MHz}{\,\mathrm{MHz}}
\safemath{\GHz}{\,\mathrm{GHz}}
\safemath{\s}{\,\mathrm{s}}
\safemath{\ms}{\,\mathrm{ms}}
\safemath{\mus}{\,\mathrm{\text{\textmu}s}}
\safemath{\ns}{\,\mathrm{ns}}
\safemath{\ps}{\,\mathrm{ps}}
\safemath{\meter}{\,\mathrm{m}}
\safemath{\mm}{\,\mathrm{mm}}
\safemath{\cm}{\,\mathrm{cm}}
\safemath{\m}{\,\mathrm{m}}
\safemath{\W}{\,\mathrm{W}}
\safemath{\mW}{\, \mathrm{mW}}
\safemath{\J}{\,\mathrm{J}}
\safemath{\K}{\,\mathrm{K}}
\safemath{\bit}{\,\mathrm{bit}}
\safemath{\nat}{\,\mathrm{nat}}

\safemath{\define}{\triangleq}			

\safemath{\equivalent}{\sim}
\safemath{\distas}{\sim}					
\safemath{\sdiff}{\Delta}				

\safemath{\reals}{\mathbb{R}}
\safemath{\positivereals}{\reals_{+}}
\safemath{\integers}{\mathbb{Z}}
\safemath{\posint}{\integers_{+}}
\safemath{\naturals}{\mathbb{N}}
\safemath{\posnaturals}{\naturals_{+}}
\safemath{\complexset}{\mathbb{C}}
\safemath{\rationals}{\mathbb{Q}}

\newcommand*{\fancyrefapplabelprefix}{app}		
\newcommand*{\fancyrefthmlabelprefix}{thm}		
\newcommand*{\fancyreflemlabelprefix}{lem}		
\newcommand*{\fancyrefcorlabelprefix}{cor}		
\newcommand*{\fancyrefdeflabelprefix}{def}		
\newcommand*{\fancyrefproplabelprefix}{prop}		
\newcommand*{\fancyrefexmpllabelprefix}{exmpl}
\newcommand*{\fancyrefalglabelprefix}{alg}		
\newcommand*{\fancyreftbllabelprefix}{tbl}		

\frefformat{vario}{\fancyrefseclabelprefix}{Sec.~#1}
\frefformat{vario}{\fancyrefthmlabelprefix}{Theorem~#1}
\frefformat{vario}{\fancyreftbllabelprefix}{Tbl.~#1}
\frefformat{vario}{\fancyreflemlabelprefix}{Lemma~#1}
\frefformat{vario}{\fancyrefcorlabelprefix}{Corollary~#1}
\frefformat{vario}{\fancyrefdeflabelprefix}{Definition~#1}
\frefformat{vario}{\fancyreffiglabelprefix}{Fig.~#1}
\frefformat{vario}{\fancyrefapplabelprefix}{Appendix~#1}
\frefformat{vario}{\fancyrefeqlabelprefix}{(#1)}
\frefformat{vario}{\fancyrefproplabelprefix}{Proposition~#1}
\frefformat{vario}{\fancyrefexmpllabelprefix}{Example~#1}
\frefformat{vario}{\fancyrefalglabelprefix}{Algorithm~#1}

 \newtheorem{thm}{Theorem}

 \newtheorem{lem}[thm]{Lemma}
 
\safemath{\dictab}{[\,\dicta\,\,\dictb\,]}

\safemath{\ysig}{\bmy}
\safemath{\ysighat}{\hat{\ysig}}
\safemath{\ysigdim}{M}
\safemath{\xsig}{\bmx}
\safemath{\xsigdim}{N}
\safemath{\nx}{n_x}
\safemath{\zsig}{\bmz}
\safemath{\zsigdim}{\ysigdim}
\safemath{\rsig}{\bmr}
\safemath{\Adict}{\bA}
\safemath{\Adicttilde}{\widetilde{\Adict}}
\safemath{\Adictdim}{\outputdim\times\xsigdim}
\safemath{\avec}{\bma}
\safemath{\avectilde}{\tilde{\avec}}
\safemath{\Bdict}{\bB}
\safemath{\Bdicttilde}{\widetilde{\Bdict}}
\safemath{\Cdict}{\bC}
\safemath{\cvec}{\bmc}
\safemath{\Ddict}{\bD}
\safemath{\Ddictdim}{\ysigdim\times\xsigdim}
\safemath{\dvec}{\bmd}
\safemath{\Ddicttilde}{\widetilde{\bD}}
\safemath{\Bonb}{\bB}
\safemath{\bvec}{\bmb}
\safemath{\Bonbdim}{\ysigdim\times\ysigdim}
\safemath{\noise}{\bmn}
\safemath{\noisedim}{\ysigim}
\safemath{\err}{\bme}
\safemath{\errdim}{\ysigdim}
\safemath{\errset}{\setE}
\safemath{\nerr}{n_e}
\safemath{\delop}{\bP_\errset}
\safemath{\delopc}{\bP_{{\errset}^c}}

\safemath{\cplxi}{\imath}
\safemath{\cplxj}{\jmath}

\safemath{\dict}{\matD}
\safemath{\inputdim}{N}		
\safemath{\outputdim}{M}		
\safemath{\sparsity}{S}	
\safemath{\inputdimA}{{N_a}}	
\safemath{\inputdimB}{{N_b}}	
\safemath{\elemA}{{n_a}}	
\safemath{\elemB}{{n_b}}	
\safemath{\resA}{\matR_a}	
\safemath{\resB}{\matR_b}	
\safemath{\subD}{\matS} 
\safemath{\subA}{\matS_a} 
\safemath{\subB}{\matS_b} 
\safemath{\dicta}{\matA} 	
\safemath{\dictb}{\matB} 	
\safemath{\hollowS}{H}
\safemath{\hollowA}{H_a}
\safemath{\hollowB}{H_b}
\safemath{\cross}{Z}
\safemath{\coh}{\mu_d}			
\safemath{\coha}{\mu_a}			
\safemath{\cohb}{\mu_b}			
\safemath{\mubs}{\nu}	
\safemath{\cohm}{\mu_m} 
\safemath{\dictset}{\setD}	
\safemath{\dictsetp}{\dictset(\coh,\coha,\cohb)}	
\safemath{\dictsetgen}{\dictset_\text{gen}}
\safemath{\dictsetgenp}{\dictsetgen(\coh)}
\safemath{\dictsetonb}{\dictset_\text{onb}}
\safemath{\dictsetonbp}{\dictsetonb(\coh)}

\safemath{\leftside}{U}
\safemath{\rightsideA}{R_a}
\safemath{\rightsideB}{R_b}

\safemath{\indexS}{\setI_S} 

\safemath{\na}{n_a}			
\safemath{\nb}{n_b}			
\safemath{\coeffa}{p_i}	
\safemath{\coeffb}{q_j}	
\safemath{\seta}{\setP}		
\safemath{\setb}{\setQ}     
\safemath{\setw}{\setW}	
\safemath{\setz}{\setZ}	
\safemath{\cola}{\veca}		
\safemath{\colb}{\vecb}		
\safemath{\cold}{\vecd}		
\safemath{\inputvec}{\vecx} 	
\safemath{\error}{\vece}	
\safemath{\noiseout}{\vecz} 	
\safemath{\inputvecel}{x}
\safemath{\inputveca}{\vecx_a}
\safemath{\inputvecb}{\vecx_b}
\safemath{\outputvec}{\vecy}	
\safemath{\lambdamin}{\lambda_{\mathrm{min}}}

\safemath{\elltwo}{\ell_2}
\safemath{\ellone}{\ell_1}
\safemath{\ellzero}{\ell_0}
\safemath{\ellinf}{\ell_\infty}
\safemath{\ellinftilde}{\ell_{\widetilde\infty}}
\safemath{\licard}{Z(\coh,\coha,\cohb)}
\safemath{\xsol}{\hat{x}}
\safemath{\xbord}{x_b}		
\safemath{\xstat}{x_s}		
\safemath{\xstatLone}{\tilde{x}_s}
\safemath{\order}{\mathcal{O}} 
\safemath{\scales}{\Theta} 
\safemath{\ones}{\mathbf{1}} 
\safemath{\zeroes}{\mathbf{0}} 
\safemath{\thlone}{\kappa(\coh,\cohb)} 
\safemath{\constoneA}{\delta} 
\safemath{\constoneB}{\epsilon} 
\safemath{\nlarge}{L}				   
\safemath{\sumlarge}{S_\nlarge}
\safemath{\maxlarger}{P_\nlarge}	   
\safemath{\Pzero}{\textrm{P0}}	
\safemath{\Pone}{\textrm{P1}}
\safemath{\vecfir}{\vecw}			 
\safemath{\vecsec}{\vecz}
\safemath{\elvecfir}{w}              
\safemath{\elvecsec}{z}				 
\safemath{\nlargefir}{n}
\safemath{\normout}{\gamma}
\safemath{\auxfun}{h}
\safemath{\supp}{\textrm{supp}}

\safemath{\indexa}{\ell}
\safemath{\indexb}{r}
\safemath{\indexc}{i}
\safemath{\indexd}{j}

\safemath{\project}{P}

\usepackage{framed}

\newtheorem{ass}{Assumption}
\newcommand*{\fancyrefasslabelprefix}{ass}
\frefformat{vario}{\fancyrefasslabelprefix}{Assumption~#1}

\IEEEoverridecommandlockouts

\begin{document}

\title{Mitigating SAR-ADC Non-Idealities in \\ Massive MU-MIMO Systems via Affine Models}

\author{J\'er\'emy Guichemerre and Christoph Studer\\[0.3cm]
\textit{Dept. of Information Technology and Electrical Engineering, ETH Zurich, Switzerland} \\ 
\textit{email: jeremyg@iis.ee.ethz.ch and studer@ethz.ch}
\thanks{The work of JG and CS was supported in part by an ETH Research Grant.}
\thanks{The authors would like to thank Seyed Hadi Mirfarshbafan and Darja Nonaca for their help with the system simulator.}}

\maketitle

\begin{abstract}
Low-resolution data converters can significantly reduce the power consumption and silicon area of all-digital massive multi-user (MU) multiple-input multiple-output (MIMO) basestations. However, the existing literature almost exclusively focuses on idealistic quantization models, neglecting the inherent non-idealities present in real-world analog-to-digital converter (ADC) implementations. To overcome this limitation, we propose two affine models, one based on Bussgang's decomposition and one that maximizes the signal-to-distortion ratio (SDR), both accounting for the most prominent non-idealities in successive approximation register (SAR) ADCs. Subsequently, we utilize these models to devise low-complexity methods that mitigate SAR-ADC non-idealities in massive MU-MIMO wireless systems.

\end{abstract}


\section{Introduction}

Millimeter-wave (mmWave) frequencies are targeted for fifth-generation (5G) and beyond-5G wireless communication systems as a solution to increasing the amount of available bandwidth~\cite{Rappaport13,Swindlehurst14}.
Operating at mmWave frequencies, however, comes at the cost of high path loss.
Massive multiple-input multiple-output (MIMO) is able to simultaneously mitigate the high path loss at mmWave frequencies and enables multiuser~(MU) communication capabilities~\cite{Lu14,Bjornsso17}.
Notwithstanding, the large number of antennas and analog front-ends (AFEs) required for all-digital massive MU-MIMO  results in excessively high power consumption and silicon area.

A popular remedy to reduce power and area is to employ low-resolution data converters~\cite{Jacobsson17,Mollen17}; the use of such data converters also relaxes the quality requirements of the AFEs.
The literature proposes a variety of methods that mitigate the adverse impact of coarse quantization~\cite{Oscar20,Mezghani10,Liang16} while keeping the advantages of all-digital beamformer architectures.
However, as pointed out recently in~\cite{Guichemerre23}, non-idealities in the data converters themselves can have a significant impact on the system performance, but are routinely ignored in the literature.

\subsection{Contributions}

We study the effects of hardware non-idealities in successive approximation register (SAR) ADCs~\cite{Suarez74}, which are commonly used for bandwidths ranging from tens of megahertz to low gigahertz at a medium resolution (5\,bit to 10\,bit)~\cite{adc_survey}.
We propose two affine signal decompositions for non-linear systems: (i) we generalize Bussgang's decomposition~\cite{Bussgang52} to an affine model and (ii) we propose the \emph{max-SDR decomposition} which maximizes the signal-to-distortion ratio (SDR).
In order to assess the properties of the Bussgang and max-SDR decompositions, we apply them to an ideal quantizer.
We then analyze the impact of capacitor mismatches in a single SAR ADC using both affine models and the analysis put forward in~\cite{Guichemerre23}.
We finally demonstrate the efficacy of affine corrections on quantized mmWave massive MU-MIMO systems.


\subsection{Notation}

We write matrices in bold uppercase, column vectors in bold lowercase, and sets in calligraphic letters. The Frobenius norm of a matrix $\bA$ is $\frobnorm{\bA}$.
The expectation and variance of a random variable~$X$ are $\Ex{X}$ and $\Var{X}$, respectively.
We write the derivative of a function~$f$ as~$f'$.
%


\section{Affine modeling of non-linear distortions} \label{sec:affine}

We now present an affine model of a non-linear function and review Bussgang's decomposition.
We then propose a new signal decomposition, the max-SDR decomposition, which maximizes the SDR.
Finally, we investigate the fundamental properties of these models for an ideal quantizer.

\subsection{Affine Model} \label{subsec:buss_model}

In order to model the impact of a non-linear function ${f\colon\opR\to\opR}$ applied to a real-valued zero-mean random variable~$X$ of variance ${\sigma^2_{\!X}>0}$, we propose the use of an affine model.
Specifically, the impact of $f$ is modeled as an offset $\eta\in\opR$, a gain $\beta\in\opR$, and an input-dependent distortion $D$:
\be \label{eq:linear_general}
f(X)=\beta X+ \eta +D.
\ee

In order to analyze this affine model, we must define several quantities. Before this, we make the following assumption:
\begin{ass} \label{ass:ass}
We require $\Ex{f(X)}$, $\Ex{f(X)^2}$, and $\Ex{Xf(X)}$\ to be finite. We will also require\footnote{Technically, the affine decompositions we present can be defined even if ${\Ex{D^2}=0}$, which means that $f$ is affine on the set of values the input $X$ can assume. This case is, however, of little practical relevance and prevents one from properly defining the SDR.} ${\Ex{D^2}>0}$.
\end{ass}

The first quantity we need is the power of the distortion $D$:
\begin{alignat}{2}
\Ex{D^2} 	&= \opE[( f(X)		-&& \beta X - \eta )^2]				\\
	 		&=	\Ex{f(X)^2} 		&&+ \beta^2 \sigma_{\!X}^2 - 2\beta\Ex{Xf(X)}	 \nonumber 	\\
			&					&&+ \eta^2 - 2\eta\Ex{f(X)}. \label{eq:d2_general}
\end{alignat}
The second quantity is the SDR, which we define as
\be \label{eq:sdr_general}
\mathrm{\textit{SDR}}	=	\frac{\Ex{(\beta X)^2}\!}{\Ex{D^2}} = \frac{\beta^2\sigma^2_{\!X}}{\Ex{D^2}}.
\ee

\subsection{Bussgang's Decomposition}

A well-known special case of the model in \fref{eq:linear_general} is Bussgang's decomposition~\cite{Bussgang52}.
Here, $\beta$ and $\eta$ are chosen so that the power of the distortion~$D$ is minimized.
A consequence of this decomposition, widely known as the orthogonality principle, is that the distortion~$D$ is uncorrelated to the input~$X$.

By taking the partial derivative of \fref{eq:d2_general} with respect to $\beta$ and~$\eta$, and by setting them to zero, we obtain the affine Bussgang gain~$\beta_b$ and offset~$\eta_b$:
\be\label{eq:buss_B}
\beta_\textnormal{b} = \frac{\Ex{Xf(X)}}{\sigma^2_{\!X}}\overset{(\conv)}{=}\Ex{f'(X)} \quad\textnormal{and}\quad \eta_\textnormal{b} = \Ex{f(X)}\!.
\ee
Here, $(\conv)$ holds if X is Gaussian (cf.~Stein's lemma \cite{Stein81}).
Note that the conventional Bussgang's decomposition does not include the offset~$\eta$ in \fref{eq:linear_general}, absorbing any residual affine component in the distortion~$D$.
However, the generalized affine Bussgang's decomposition coincides with the conventional \emph{linear} Bussgang's decomposition if ${\Ex{f(X)}=0}$.

Evaluating the SDR in \fref{eq:sdr_general} for the affine Bussgang's decomposition results in
\be \label{eq:buss_sdr}
\textit{SDR}_\textnormal{b} = \frac{\Ex{Xf(X)}^2}{\sigma_{\!X}^2\Var{f(X)}-\Ex{Xf(X)}^2}.
\ee

\subsection{Max-SDR Decomposition}

Bussgang's decomposition is widely used in the literature and is particularly useful if the distortion~$D$ should be uncorrelated with the input~$X$.
However, in many practical cases, it is more relevant to maximize the SDR.
We therefore propose a new signal decomposition that maximizes the SDR:

\newcommand{\footlemma}{\footnote{%
Note that the restriction ${\Ex{X f(X)}\neq 0}$ prevents the pathological case where the output is uncorrelated with the input. In this case, the Bussgang gain is zero and the gain that maximizes the SDR is also zero.
}}

\begin{lem} \label{lem:max_sdr}
Let \fref{ass:ass} hold. Furthermore, assume that ${\Ex{X f(X)}\neq 0}$. Then, the gain~$\beta_\textnormal{m}$ and offset~$\eta_\textnormal{m}$ that maximize the \textnormal{SDR} in \fref{eq:sdr_general} are\footlemma
\be\label{eq:max_B}
\beta_\textnormal{m} = \frac{\Var{f(X)}}{\Ex{X f(X)}}  \quad \textnormal{and} \quad \eta_\textnormal{m} = \Ex{f(X)}.
\ee
\end{lem}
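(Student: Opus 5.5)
We maximize the SDR in \fref{eq:sdr_general} in two stages: first over the offset $\eta$ with the gain $\beta$ held fixed, and then over $\beta$.

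\textbf{Optimizing the offset.} The numerator $\beta^2\sigma_{\!X}^2$ does not depend on $\eta$. Maximizing the SDR in $\eta$ therefore amounts to minimizing $\Ex{D^2}$ in $\eta$. By \fref{eq:d2_general}, $\Ex{D^2}$ is a strictly convex quadratic in $\eta$. Its unique minimizer is $\eta=\Ex{f(X)}$ for every $\beta$, which gives $\eta_\textnormal{m}=\Ex{f(X)}$. Substituting this value and using $\Ex{f(X)^2}-\Ex{f(X)}^2=\Var{f(X)}$, we write $V=\Var{f(X)}$ and $C=\Ex{Xf(X)}$ and obtain
\begin{align*}
\textit{SDR}(\beta)=\frac{\beta^2\sigma_{\!X}^2}{V+\beta^2\sigma_{\!X}^2-2\beta C}.
\end{align*}

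\textbf{Optimizing the gain.} The case $\beta=0$ gives zero SDR and can be discarded. For $\beta\neq0$ we set $t=1/\beta$ and divide numerator and denominator by $\beta^2$. Maximizing the SDR is then equivalent to minimizing the positive quantity
\begin{align*}
g(t)=\frac{Vt^2-2Ct+\sigma_{\!X}^2}{\sigma_{\!X}^2}
\end{align*}
over $t\neq0$.

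We first check that $V>0$. If $V=0$, then $f(X)$ is almost surely constant, so $C=\Ex{X}\cdot\mathrm{const}=0$, which contradicts the hypothesis $C\neq0$. Hence $g$ is a strictly convex parabola with minimizer $t^\star=C/V$. This minimizer is nonzero precisely because $C\neq0$, so it lies in the admissible set $t\neq0$. It follows that $\beta_\textnormal{m}=1/t^\star=V/C$, which is \fref{eq:max_B}.

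\textbf{Main obstacle: well-posedness.} We must show that the denominator $\Ex{D^2}$ stays positive at the optimum, so that the SDR is finite and the reciprocal substitution is legitimate. Evaluating $g$ at $t^\star$ gives
\begin{align*}
g(t^\star)=\frac{\sigma_{\!X}^2V-C^2}{\sigma_{\!X}^2V}.
\end{align*}
This quantity is nonnegative by Cauchy--Schwarz, since $C^2=\Ex{X(f(X)-\Ex{f(X)})}^2\le\sigma_{\!X}^2V$.

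Equality in Cauchy--Schwarz would make $f(X)$ almost surely affine in $X$, forcing $\Ex{D^2}=0$ for the Bussgang choice. This is excluded by \fref{ass:ass}, which we read as requiring that no affine fit of $f(X)$ is exact. So $g(t^\star)>0$, and the supremum is attained. The resulting maximal SDR is $C^2/(\sigma_{\!X}^2V-C^2)$, which coincides with \fref{eq:buss_sdr}. This agreement is a useful consistency check: the max-SDR gain is a rescaling of the Bussgang gain, since $\beta_\textnormal{m}/\beta_\textnormal{b}=\sigma_{\!X}^2V/C^2$.
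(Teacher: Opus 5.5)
Your proof is correct. It has the same two-stage structure as the paper's proof, offset first and then gain. The offset step is essentially identical: the paper sets $\partial\textit{SDR}/\partial\eta=0$, and you minimize the convex quadratic denominator, which is the same computation.

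The gain step is different. The paper differentiates the SDR in $\beta$, finds the stationary points $\beta=0$ and $\beta=\Var{f(X)}/\Ex{Xf(X)}$, and dismisses $\beta=0$ by inspection. It never argues that the nonzero critical point is a global maximizer. That does follow from the sign pattern of $\beta\bigl(\Var{f(X)}-\beta\Ex{Xf(X)}\bigr)$ together with $\textit{SDR}\to1$ as $|\beta|\to\infty$, but the paper leaves it implicit. Your substitution $t=1/\beta$ turns $1/\textit{SDR}$ into a strictly convex quadratic, which gives global optimality, uniqueness, and attainment at once. It also shows where each hypothesis is used:
\begin{itemize}
\item $\Ex{Xf(X)}\neq0$ gives $V>0$ and $t^\star\neq0$;
\item Assumption~\ref{ass:ass}, read as in the paper's footnote as $f$ not being affine on the support of $X$, gives strict Cauchy--Schwarz and hence a positive denominator.
\end{itemize}
As a bonus, your route hands you the optimal value directly.

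You miscompute that optimal value in your last paragraph. Since $\textit{SDR}=1/g(t)$, the maximum is $1/g(t^\star)=\sigma_{\!X}^2V/(\sigma_{\!X}^2V-C^2)=1+\textit{SDR}_\textnormal{b}$, which is the paper's Lemma~\ref{lem:sdr_compare}. It is not $C^2/(\sigma_{\!X}^2V-C^2)$.

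Your claimed agreement with \eqref{eq:buss_sdr} actually contradicts what you just proved. The maximizer is unique, and $\beta_\textnormal{b}=C/\sigma_{\!X}^2\neq V/C=\beta_\textnormal{m}$ precisely because Cauchy--Schwarz is strict. So the Bussgang gain must give a strictly smaller SDR. The lemma does not need this remark, so the proof stands, but you should correct the ``consistency check'' or drop it.
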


\begin{proof}
We take the partial derivative of the SDR in $\eta$
\be \label{eq:max_dsdr_eta}
\frac{\partial\textit{SDR}}{\partial\eta} = -\frac{2\beta^2 \sigma^2_{\!X}}{\Ex{D^2}^2} \left(\eta-\Ex{f(X)}\right)\!,
\ee
and set it to zero to obain
\be \label{eq:max_eta}
\eta_\textnormal{m} = \Ex{f(X)}.
\ee

We insert \fref{eq:max_eta} into \fref{eq:d2_general}, and take the partial derivative with respect to $\beta$
\be \label{eq:max_dsdr_beta}
\frac{\partial\textit{SDR}}{\partial\beta} = \frac{2\beta \sigma^2_{\!X}}{\Ex{D^2}^2} \left(\Var{f(X)}-\beta\Ex{X f(X)}\right)\!,
\ee
and set it to zero to obtain
\be \label{eq:max_beta}
\beta_\textnormal{m} = \frac{\Var{f(X)}}{\Ex{Xf(X)}}.
\ee
Note that $\beta=0$ also nulls \fref{eq:max_dsdr_beta}, but one can verify that this solution does not maximize the SDR---it minimizes it!
\end{proof}

The following results link the max-SDR decomposition with Bussgang's decomposition.

\begin{lem} \label{lem:sdr_compare}
Assume the conditions in \fref{lem:max_sdr} hold. Then, the \textnormal{SDR} corresponding to the max-SDR decomposition satisfies
\be \label{eq:sdr_compare}
\textit{SDR}_\textnormal{m} = 1+\mathrm{\textit{SDR}_\textnormal{b}}.
\ee
\end{lem}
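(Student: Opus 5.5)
The plan is a direct computation: substitute the max-SDR parameters of \fref{lem:max_sdr} into the distortion power \fref{eq:d2_general} and the SDR definition \fref{eq:sdr_general}, then compare with \fref{eq:buss_sdr}. To keep the algebra short, introduce the abbreviations $V=\Var{f(X)}$ and $C=\Ex{Xf(X)}$, with $C\neq 0$ by assumption.

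The first step handles the offset. With $\eta=\eta_\textnormal{m}=\Ex{f(X)}$, the terms $\Ex{f(X)^2}+\eta^2-2\eta\Ex{f(X)}$ in \fref{eq:d2_general} collapse to $V$. Hence, for any gain $\beta$,
\be
\Ex{D^2}=V+\beta^2\sigma_{\!X}^2-2\beta C .
\ee

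The second step inserts the gain $\beta_\textnormal{m}=V/C$. This gives
\be
\Ex{D^2}=V+\frac{V^2\sigma_{\!X}^2}{C^2}-2V=\frac{V\bigl(V\sigma_{\!X}^2-C^2\bigr)}{C^2}.
\ee
The numerator of the SDR is $\beta_\textnormal{m}^2\sigma_{\!X}^2=V^2\sigma_{\!X}^2/C^2$. Dividing the two expressions yields
\be
\textit{SDR}_\textnormal{m}=\frac{V\sigma_{\!X}^2}{V\sigma_{\!X}^2-C^2}.
\ee

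The final step compares this with \fref{eq:buss_sdr}, which reads $\textit{SDR}_\textnormal{b}=C^2/(\sigma_{\!X}^2V-C^2)$. Adding one to it gives $(\sigma_{\!X}^2V-C^2+C^2)/(\sigma_{\!X}^2V-C^2)$, which equals the expression above and proves \fref{eq:sdr_compare}.

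The only point needing care is that every division is legitimate. We need $V>0$ and $V\sigma_{\!X}^2-C^2>0$. The quantity $V\sigma_{\!X}^2-C^2$ is exactly $\sigma_{\!X}^2$ times the Bussgang distortion power. That power is nonnegative by Cauchy--Schwarz applied to $X$ and $f(X)-\Ex{f(X)}$, using that $X$ is zero-mean so $\Ex{X(f(X)-\Ex{f(X)})}=C$. It is strictly positive by \fref{ass:ass}, interpreting the requirement $\Ex{D^2}>0$ as ruling out $f$ being affine on the support of $X$, so that equality in Cauchy--Schwarz cannot occur. Then $V>C^2/\sigma_{\!X}^2>0$, and all denominators are nonzero.
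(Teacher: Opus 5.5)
Your proposal is correct and follows the paper's proof: substitute $\eta_\textnormal{m}$ and $\beta_\textnormal{m}$ into \fref{eq:sdr_general} to obtain $\textit{SDR}_\textnormal{m}=\sigma_{\!X}^2\Var{f(X)}/(\sigma_{\!X}^2\Var{f(X)}-\Ex{Xf(X)}^2)$, then compare with \fref{eq:buss_sdr}. Your additional Cauchy--Schwarz argument that the denominators are strictly positive is a detail the paper leaves implicit.
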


\begin{proof}
By inserting \fref{eq:max_eta} and \fref{eq:max_beta} into \fref{eq:sdr_general}, we obtain
\be
\textit{SDR}_\textnormal{m} = \frac{\sigma_{\!X}^2\Var{f(X)}}{\sigma_{\!X}^2\Var{f(X)}-\Ex{Xf(X)}^2}.
\ee
We compare this last expression to \fref{eq:buss_sdr} and obtain \fref{eq:sdr_compare}.
\end{proof}

Direct consequences of \fref{lem:sdr_compare} are that (i) $\textit{SDR}_\textnormal{m}$ is strictly greater than $\textit{SDR}_\textnormal{b}$ and (ii) $\textit{SDR}_\textnormal{m}$ is greater or equal to 1, which means that the power of the wanted signal cannot be lower than the power of the remaining distortion in the case of the max-SDR decomposition.
Again, this property of course only holds if all quantities are properly defined---we exclude the pathological cases mentioned earlier.

Another important consequence of \fref{lem:sdr_compare} is that the input power~$\sigma^2_{\!X}$ that maximizes the SDR at a given non-linearity~$f$ is the same both for Bussgang's decomposition and the {max-SDR} decomposition. This means that both decompositions are equivalent when it comes to input power-control schemes.
This equality of optimal input power can be proven by taking the partial derivative of \fref{eq:sdr_compare} with respect to $\sigma^2_{\!X}$.

\begin{lem} \label{lem:beta_compare}
Assume the conditions in \fref{lem:max_sdr} hold. Then, 
\be \label{eq:beta_compare}
\frac{\beta_\textnormal{m}}{\beta_\textnormal{b}} = 1 + \frac{1}{\textit{SDR}_\textnormal{b}}.
\ee
\end{lem}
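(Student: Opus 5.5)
The plan is a direct computation from the closed-form gains and the Bussgang SDR already available. By \fref{eq:buss_B} we have $\beta_\textnormal{b} = \Ex{Xf(X)}/\sigma^2_{\!X}$. By \fref{lem:max_sdr} we have $\beta_\textnormal{m} = \Var{f(X)}/\Ex{Xf(X)}$. Both are well defined: $\Ex{Xf(X)}\neq 0$ by assumption and $\sigma^2_{\!X}>0$. I will first form the ratio
\be
\frac{\beta_\textnormal{m}}{\beta_\textnormal{b}} = \frac{\sigma^2_{\!X}\Var{f(X)}}{\Ex{Xf(X)}^2}.
\ee

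Next I will rewrite the right-hand side of \fref{eq:beta_compare} using \fref{eq:buss_sdr}. Inverting the Bussgang SDR gives
\be
\frac{1}{\textit{SDR}_\textnormal{b}} = \frac{\sigma^2_{\!X}\Var{f(X)}-\Ex{Xf(X)}^2}{\Ex{Xf(X)}^2}.
\ee
Adding one cancels the $-\Ex{Xf(X)}^2$ term in the numerator. This leaves exactly $\sigma^2_{\!X}\Var{f(X)}/\Ex{Xf(X)}^2$, which is the ratio computed above.

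An equivalent shortcut is to observe that the ratio above coincides with the expression for $\textit{SDR}_\textnormal{m}$ derived in the proof of \fref{lem:sdr_compare}. Then \fref{lem:sdr_compare} gives $\beta_\textnormal{m}/\beta_\textnormal{b} = \textit{SDR}_\textnormal{m} = 1+\textit{SDR}_\textnormal{b}$. Dividing by $\textit{SDR}_\textnormal{b}$ would then yield the claimed identity only if $\textit{SDR}_\textnormal{m}/\textit{SDR}_\textnormal{b}$ were the object. Since that is not the object, I will present the direct route and keep this observation as a consistency check.

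The only delicate point is verifying that every division is legitimate. $\textit{SDR}_\textnormal{b}$ must be finite and nonzero:
\begin{itemize}
\item Its numerator $\Ex{Xf(X)}^2$ is positive by assumption.
\item Its denominator $\sigma^2_{\!X}\Var{f(X)}-\Ex{Xf(X)}^2 = \sigma^2_{\!X}\Ex{D^2}$ at the Bussgang parameters. This is positive by \fref{ass:ass}, or by Cauchy--Schwarz with strict inequality because $f$ is not affine on the support of $X$.
\end{itemize}
I expect no real obstacle beyond stating these nondegeneracy facts.
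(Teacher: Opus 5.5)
Your direct computation is correct and is essentially the paper's proof. The paper forms the same ratio $\sigma^2_{\!X}\Var{f(X)}/\Ex{Xf(X)}^2$, splits it as $(\sigma^2_{\!X}\Var{f(X)}-\Ex{Xf(X)}^2)/\Ex{Xf(X)}^2+1$, and recognizes the first term as $1/\textit{SDR}_\textnormal{b}$; you run the same algebra starting from the right-hand side. Your nondegeneracy checks are correct, including the identification of the denominator with $\sigma^2_{\!X}\Ex{D^2}$ at the Bussgang parameters, and the paper omits them.

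Drop or fix your ``shortcut'' paragraph, because its central claim is false. The ratio $\sigma^2_{\!X}\Var{f(X)}/\Ex{Xf(X)}^2$ is \emph{not} $\textit{SDR}_\textnormal{m}$, whose denominator is $\sigma^2_{\!X}\Var{f(X)}-\Ex{Xf(X)}^2$ rather than $\Ex{Xf(X)}^2$. Your chain $\beta_\textnormal{m}/\beta_\textnormal{b}=\textit{SDR}_\textnormal{m}=1+\textit{SDR}_\textnormal{b}$ therefore contradicts the lemma whenever $\textit{SDR}_\textnormal{b}\neq 1$, so it cannot serve as a consistency check. The correct relation is $\beta_\textnormal{m}/\beta_\textnormal{b}=\textit{SDR}_\textnormal{m}/\textit{SDR}_\textnormal{b}$. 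Combined with \fref{lem:sdr_compare}, this gives $(1+\textit{SDR}_\textnormal{b})/\textit{SDR}_\textnormal{b}=1+1/\textit{SDR}_\textnormal{b}$, which is a valid alternative one-line proof.
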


\begin{proof}
We go back to the definitions of $\beta_\textnormal{b}$ and $\beta_\textnormal{m}$ from \fref{eq:buss_B} and \fref{eq:max_B} and write
\be
\frac{\beta_\textnormal{m}}{\beta_\textnormal{b}} = \frac{\sigma^2_{\!X}\Var{f(X)}}{\Ex{Xf(X)}^2}.
\ee
We then decompose the expression as follows:
\be
\frac{\beta_\textnormal{m}}{\beta_\textnormal{b}} = \frac{\sigma_{\!X}^2\Var{f(X)}-\Ex{Xf(X)}^2}{\Ex{Xf(X)}^2} + \frac{\Ex{Xf(X)}^2}{\Ex{Xf(X)}^2}.
\ee
We finally notice that the left side corresponds to the reciprocal of \fref{eq:buss_sdr}. Thus, we arrive at \fref{eq:beta_compare}.
\end{proof}

A direct consequence of \fref{lem:beta_compare} is that $\beta_\textnormal{m}$ is not smaller than $\beta_\textnormal{b}$ and strictly larger if the SDR is finite.
\fref{lem:beta_compare} also implies that the Bussgang and max-SDR decomposition are virtually the same for large values of SDR: $\beta_\textnormal{b}$ tends to~$\beta_\textnormal{m}$ when $\textit{SDR}_\textnormal{b}$ tends to infinity (which is the same as $\textit{SDR}_\textnormal{m}$ tending to infinity).
However, if $f$ has a strong effect on the SDR and makes it lower or close to 1, then the decompositions will be very different as the $1/\textit{SDR}_\textnormal{b}$ term will become larger.

\subsection{Application to an Ideal Quantizer}

\begin{figure*}[tp]
\centering
	\subfigure[SDR, correlation, and distortion power.]{\includegraphics[width=0.32\linewidth]{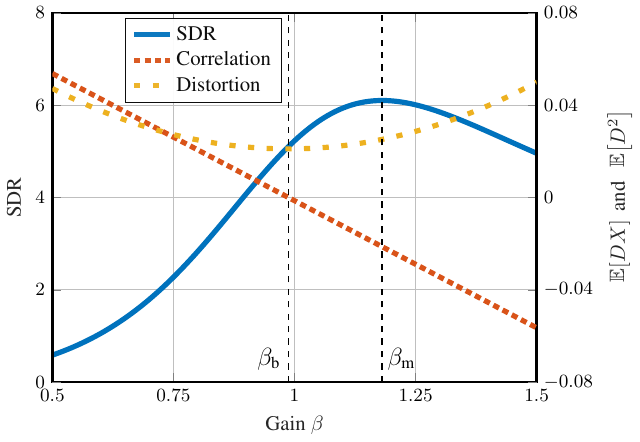}\label{fig:ideal_adc_sdr}}
\hfill
	\subfigure[Effective resolution (EFR).]{\includegraphics[width=0.28\linewidth]{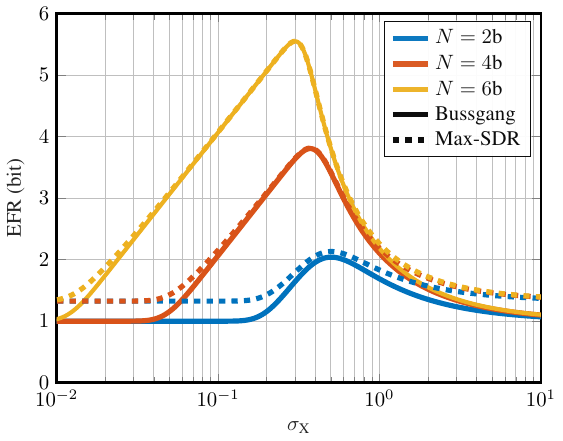}\label{fig:ideal_adc_efr}}
\hfill
	\subfigure[Model gains $\beta_\textnormal{b}$ and $\beta_\textnormal{m}$.]{\includegraphics[width=0.3\linewidth]{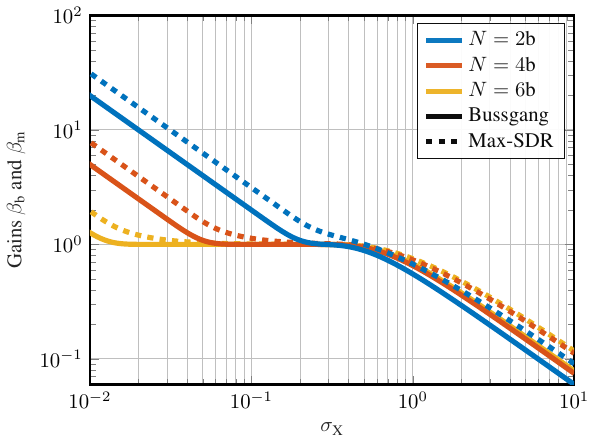}\label{fig:ideal_adc_beta}}
\vspace{-0.15cm}  
\caption{(a) SDR, distortion power, and input to distortion correlation at a fixed $\sigma^2_{\!X}$ for a 2-bit uniform symmetric mid-rise quantizer. The chosen input variance~$\sigma^2_{\!X}$ enables one to reach the maximum achievable SDR for such a quantizer. (b) EFR of $N$-bit uniform symmetric mid-rise quantizers with Bussgang's and max-SDR decompositions as a function of the input variance $\sigma^2_{\!X}$. (c) Bussgang's gain $\beta_\textnormal{b}$ and max-SDR gain $\beta_\textnormal{m}$ as a function of the input variance $\sigma^2_{\!X}$.}
\end{figure*}

We now illustrate the difference between the Bussgang and max-SDR decompositions with the example of an ideal 2\,bit uniform symmetric mid-rise quantizer with input range $[-1,1]$, where values outside this range are clipped.
We consider a zero-mean Gaussian input~$X$ of variance $\sigma^2_{\!X}$.
Note that $\eta=0$ for both decompositions as $f$ is symmetric.

We first set the input power~$\sigma^2_{\!X}$ so that the highest possible SDR for the Bussgang and max-SDR decompositions is achieved.
From \fref{lem:sdr_compare} follows that this optimum is the same for both decompositions.
\fref{fig:ideal_adc_sdr} shows the SDR at this fixed input power for a varying gain~$\beta$.
We see that the max-SDR decomposition indeed maximizes the SDR whereas Bussgang's decomposition does not.
We also verify that Bussgang's decomposition minimizes the distortion power while keeping the input~$X$ and the distortion~$D$ uncorrelated.
Note that \fref{fig:ideal_adc_sdr} is generated by evaluating \fref{eq:sdr_general}, \fref{eq:d2_general}, and $\Ex{DX}$ using the analytical results derived in~\cite{Guichemerre23}.

In order to quantify the resolution of a quantizer, we express its effective resolution (EFR) as~\cite{Guichemerre23}
\be \label{eq:EFROMG}
\textit{EFR} \define \frac{\textit{SDR}_\text{dB}\!+\!10\log_{10}(2(\pi \!-\!2))}{20\log_{10}(2)}  
					\approx \frac{\textit{SDR}_\text{dB}\!+\!3.59}{6.02},
\ee
with $\textit{SDR}_\text{dB}=10\log_{10}(\textit{SDR})$.
The EFR is an alternative to the effective number of bits (ENOB)~\cite{Walden99}. The ENOB is generally used to assess the resolution of an ADC, but it is measured with a full-scale sinusoidal input.
In contrast, the EFR, as defined in \fref{eq:EFROMG}, is based on Bussgang's decomposition and is measured with a zero-mean Gaussian input.
The EFR  increases by one bit for every doubling of the ratio between the wanted signal amplitude and the amplitude of the distortion error, and sets the EFR of an ideal 1\,bit quantizer to 1\,bit.
\fref{fig:ideal_adc_efr} shows the EFR of ideal quantizers as a function of the input power~$\sigma^2_{\!X}$ when their output are interpreted with either Bussgang's decomposition or the proposed max-SDR decomposition.

As expected from \fref{lem:sdr_compare}, both decompositions lead to similar SDR in regimes where the SDR is high.
For the same input powers, \fref{fig:ideal_adc_beta} illsutrates \fref{lem:beta_compare} by showing that the gains $\beta_\textnormal{b}$ and $\beta_\textnormal{m}$ are  close to each other when the SDR (and, hence, also the EFR) is high.

To conclude the discussion of this section, we have proposed a new decomposition, the max-SDR decomposition, that is preferable over Bussgang's decomposition in cases where uncorrelatedness between the input and the residual distortion is not required.
We have also generalized both decompositions to affine models that take a possible offset induced by the non-linearity into account.  
We have shown that optimizing the input power for one of the decomposition also optimizes it for the other decomposition.
Finally, we have proved that the Bussgang and max-SDR decompositions are asymptotically equivalent when the SDR tends to infinity.


\section{SAR-ADC Mismatch: Model and compensation} \label{sec:SARmismatch}

\begin{figure}[tp] 
	\centering
	\includegraphics[width=0.32\textwidth]{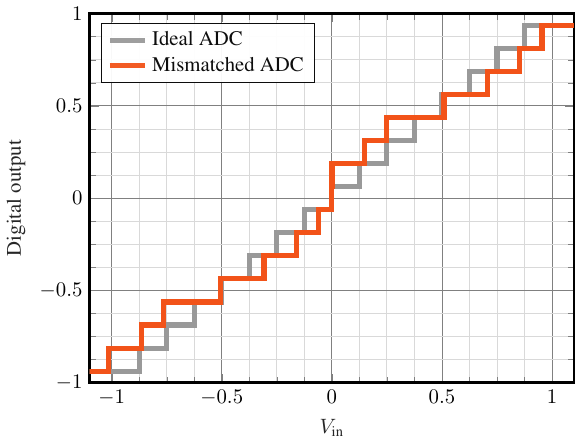}
	\vspace{-0.3cm}    
	\caption{Transfer function of a $4$\,b SAR ADC with mismatches on all capacitors.}
	\label{fig:miss_tf}
\end{figure}

\subsection{Mismatches in SAR ADCs}

A successive approximation register (SAR) ADC implements a binary search.
Modern implementations operate in differential mode, meaning that the input to be quantized is the voltage difference between a positive and a negative node.
\fref{fig:SAR_blocks} shows a high-level block diagram of a generic SAR ADC with $V_\text{in}= V_\text{in}^+ - V_\text{in}^-$.
The most significant bit (MSB) is computed directly after sampling.
Next, we need to recenter the chosen quadrant of the binary search.
To this end, one of the MSB capacitors $C_\text{MSB}$ of the capacitive digital-to-analog converter (CDAC) switches (which one depends on the value of the MSB).
Another comparison is then performed to extract the next bit, after which another quadrant correction is carried out.
This quadrant correction is performed by one of the two $C_\text{MSB}/2$ capacitors (half of the value of the first ones because the required correction is half as large).
The cycle of comparison and shift continues as many times as needed to extract the desired number of bits.
See \cite{Chan11} for more details on the operating principles of SAR ADCs.

\begin{figure}[tp] 
	\centering
	\includegraphics[width=0.49\textwidth]{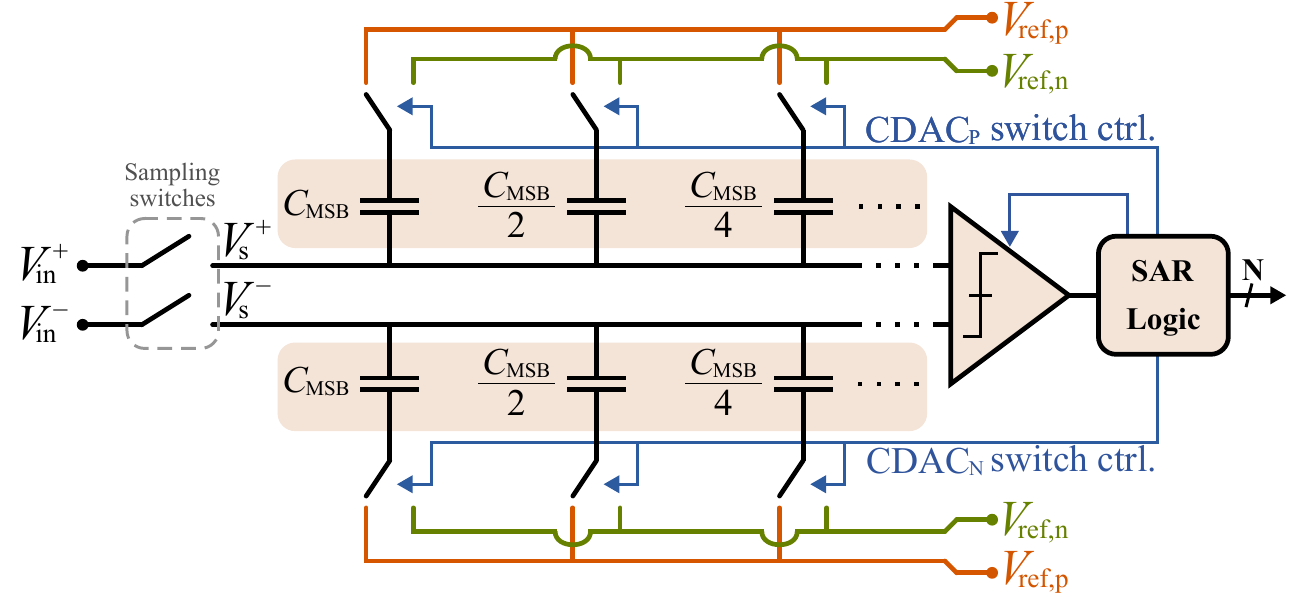}
	\vspace{-0.6cm}    
	\caption{High-level block diagram of a SAR ADC~\cite{Guichemerre23}.}
	\label{fig:SAR_blocks}
\end{figure}

\begin{figure*}[tp]
\centering
	\subfigure[CDF of the EFR with high mismatch]{\includegraphics[width=0.3\linewidth]{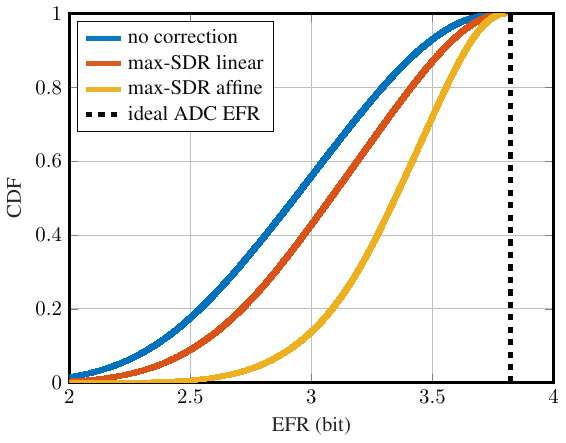}\label{fig:miss_cdf_high}}
\hfill
	\subfigure[CDF of the EFR with low mismatch]{\includegraphics[width=0.3\linewidth]{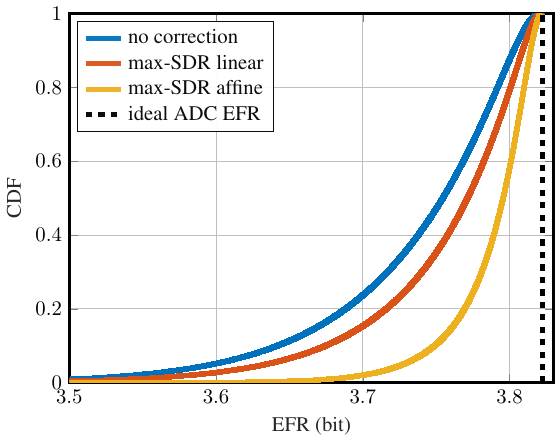}\label{fig:miss_cdf_low}}
\hfill
	\subfigure[Same as (b), log scale]{\includegraphics[width=0.3\linewidth]{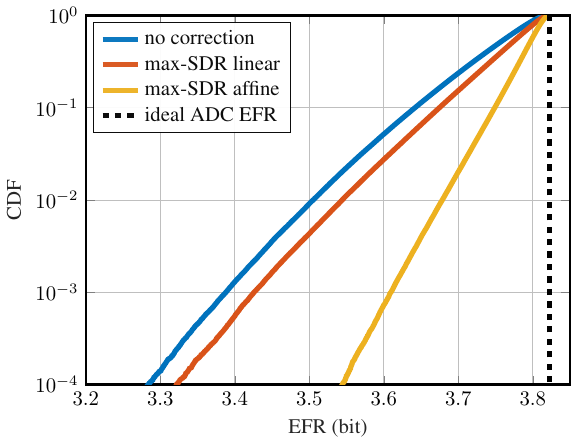}\label{fig:miss_cdf_low_log}}
\vspace{-0.15cm}  
\caption{CDF of the EFR of a 4b SAR ADC with mismatches on all capacitors with large mismatch ($\sigma_m=\Delta/2$) and small mismatch ($\sigma_m=\Delta/10$).}
\end{figure*}

Manufacturing processes used to fabricate chips are not ideal, which will affect the properties of the circuit's components in the design~\cite{Plassche94}.
In particular, capacitors will suffer from mismatches: the average value of a capacitor across many chips will be the intended one, but the capacitor value in one specific fabricated chip is the realization of a random process, typically assumed to be Gaussian with variance $\sigma_{\text{C}}^2$~\cite{Pelgrom89}.

In a SAR ADC, such mismatches lead to a non-ideal binary search as the quadrant shifts deviate from the ideal ones.
Due to such mismatches, the improper MSB capacitor value will induce "jumps" or "saddle points" (depending on the sign of the mismatch) in the input-output transfer function (TF) of the ADC (see \fref{fig:miss_tf} for an example of such a TF).
Considering an ADC with range $[-1, 1]$, this non-linearity happens at $0$ (the center), $-1$, and $1$ (the extremes of the range).
Mismatches of the next set of capacitors  will then induce similar effects, but at $-1$, $-0.5$, $0$, $0.5$, and $1$.
In general, capacitors associated to the $n\,$th bit will induce $2^{n+1}-1$ "jumps" or "saddle points" at the boundaries of the already computed $n$-bit code. 
Because of that, unlike the TF of an ideal quantizer, the TF of a SAR ADC with capacitor mismatches does not have a simple analytical form.
The issue is that mismatches of large capacitors change the influence of the mismatches of the smaller capacitors used later in the binary search.
In cases where an early error already leads to wrong quadrant corrections, many quantization steps become inaccessible (like the first positive output code in \fref{fig:miss_tf}), therefore nulling the effect of the mismatches of the subsequent less significant bits.
Due to this effect, one cannot simply add up the impact of the mismatches of each capacitor.
Nonetheless, the necessary quantities (cf. \fref{ass:ass}) can still be obtained easily using Monte--Carlo methods.

\subsection{MSB Mismatch Model}

Capacitor mismatches can lead to an offset as their effect on the ADC's TF is not symmetric.
We therefore extend the derivation from~\cite{Guichemerre23}, where we only model the effect of clipping and MSB mismatches, to include the effect of such an offset.
To perform an analysis, we assume real-valued zero-mean Gaussian inputs $X$ with variance $\sigma^2_{\!X}$.
In order to generalize the linear model from~\cite{Guichemerre23} to the affine model in \fref{eq:buss_B} or \fref{eq:max_B}, we only need the additional quantity $\Ex{f(X)}$.

As for the case of linear modeling, $m_1$ is the width of the jump or saddle point induced by the MSB mismatch in $\textnormal{CDAC}_\textnormal{P}$,~$m_2$, is the one for $\textnormal{CDAC}_\textnormal{N}$ (see~\cite{Guichemerre23} for more details).
We first compute the influence of a mismatch to positive inputs only by defining $g(m)$ as
\be
g(m) = \int_{0}^{+\infty}\!\!\!\! f(t) \phi_{\sigma_{\!X}}(t)\,\textnormal{d}t,
\ee
where $\phi_{\sigma_{\!X}}$ is the probability density function (PDF) of a real-valued zero-mean Gaussian with variance $\sigma_{\!X}^2$.

Evaluating the integral in the case of positive and negative mismatch leads to
\begin{alignat}{2}
&g_{m \ge 0}(m) = \sigma_{\!X} 	&&\left( \phi\!\left(\frac{m}{\sigma_{\!X}}\right)\! - \phi\!\left(\frac{1+m}{\sigma_{\!X}}\right)\!\right)\! \nonumber\\
&							&&+ (1+m) Q\!\left( \frac{1+m}{\sigma_{\!X}}\right)\! - m Q\!\left( \frac{m}{\sigma_{\!X}}\right) 					\\
&g_{m < 0}(m) = \sigma_{\!X} 	&&\left( \phi\!\left( 0 \right)\! - \phi\!\left(\frac{1+m}{\sigma_{\!X}}\right)\!\right)\! 					\nonumber\\
&							&&+ (1+m) Q\!\left( \frac{1+m}{\sigma_{\!X}}\right)\! - \frac{m}{2},
\end{alignat}
where $\phi(\cdot)$ is the distribution of a standard normal random variable and $Q(x)= \int_{x}^\infty\! \phi(t) \,\textnormal{d}t$ the Q-function.

The offset $\eta$ for both the Bussgang's and max-SDR decompositions then corresponds to
\be \label{eq:msb_affine}
\eta = \Ex{f(X)} = g(m_1)-g(m_2).
\ee

The result of \fref{eq:msb_affine} can be used to extend the linear model presented in~\cite{Guichemerre23} to also model possible offsets and to achieve more precise predictions of the effect of mismatches of SAR ADCs while still only modeling MSB mismatches and clipping.

\subsection{Affine Compensation} \label{sec:comp}

We now use our affine models to mitigate the impact of capacitor mismatches in SAR ADCs. 
To this end, we compare three cases: (i) an uncompensated baseline, (ii) linear compensation, and (iii) affine compensation.
The uncompensated baseline (i) corresponds to a mismatched SAR ADC in which the gain~$\beta$ is fixed to 1 and the offset $\eta$ to 0.
This models the situation where the influence of the quantizer and its mismatches are completely ignored
The linear compensation (ii) uses the linear version of the max-SDR decomposition by setting
\be \label{eq:maxsdr_lin}
\beta_\textnormal{m,lin}=\frac{\Ex{f(X)^2}}{\Ex{Xf(X)}} \quad \textnormal{and} \quad  \eta_\textnormal{m,lin}=0.
\ee
The effects of any offset is therefore absorbed in the distortion~$D$ and gain $\beta_\textnormal{m,lin}$.
The estimate of the input resulting from this decomposition is then $f(X)$ divided by the gain of \fref{eq:maxsdr_lin}.
Finally, we perform an affine compensation (iii) by applying the max-SDR decomposition of \fref{eq:max_B}.
This time, we  subtract the offset $\eta_\textnormal{m}$ from $f(X)$, and then divide by the gain $\beta_\textnormal{m}$.

For each of the three scenarios, we will report the EFR.
However, we need to show the distribution of the EFR and not only its average.
Indeed, there will be no averaging effect as the mismatches are fixed after a chip is manufactured.
We will therefore report the cumulative distribution function (CDF).


\fref{fig:miss_cdf_high} shows the CDF of the EFR over two millions mismatch realizations of a 4b SAR ADC with mismatches on all capacitors of its CDACs.
Note that we estimate the gain and offsets associated with the three compensation methods with a Monte--Carlo run of one million inputs, and this for \textit{each} mismatch realization.
The input is a zero-mean Gaussian and its variance fixed to the value maximizing the EFR of an ideal 4b ADC.
The mismatch standard deviation $\sigma_m$ of the MSB capacitors is set to a high value of $\Delta/2$, where $\Delta=0.125$ corresponds to  a least-significant bit (LSB) step.

Taking the $10$\%-quantile as a measure, meaning that $10$\% of the manufactured ADCs will have an EFR worse than what we report here, we obtain an EFR of $2.35$\,b without correction, $2.52$\,b with linear correction, and $2.91$\,b with affine correction. 
We notice a strong influence of the compensation on the achievable resolution, and additionally that considering the offset significantly lowers the distortion power.

\fref{fig:miss_cdf_low} shows the same setup as \fref{fig:miss_cdf_high}, but with a lower mismatch standard deviation of $\sigma_m=\Delta/10$. The $10$\%-quantiles of the EFR are this time $3.59$\,b, $3.62$\,b, and $3.69$\,b.
We see that compensation for the offset on top of the gain compensation has, once again, a larger impact than between linear compensation and no compensation at all.
However, because the EFR is already close to its maximum (which is the EFR of an ideal ADC), we see less of an absolute influence in terms of effective resolution.

The significance of the offset compensation even with low mismatch is shown in \fref{fig:miss_cdf_low_log}.
We show the same CDF as in \fref{fig:miss_cdf_low}, but on a  logarithmic scale.
We notice that the slope is significantly higher in the case of an affine compensation compared to that of the other methods.
This means that the EFR difference resulting from the studied methods will be accentuated when lower quantiles are chosen.
Indeed, our choice of $10$\%-quantile would not fit any industrial grades: $10$\% of the manufactured chips would not meet the target specification, resulting in a disastrous production yield.
If we consider a $0.1$\%-quantile instead, we then see a difference of approximately $0.2$\,b (corresponding to $1.2$\,dB better SDR) between linear and affine compensation methods.

In short, it is paramount to compensate not only for the gain, but also for the offset as the latter has a large impact on the resolution. 
When taking more realistic assumptions on the target production yield (e.g., the $0.1$\%-quantile mentioned above), the offset plays an even more critical role in the distortion.


\section{Impact of SAR ADC Mismatch on Quantized Massive MU-MIMO Systems}\label{sec:mimo_sim}

We now apply an affine correction to a mmWave massive MU-MIMO uplink system with mismatched ADCs based on the max-SDR affine decomposition introduced in \fref{sec:affine}.

\subsection{System Model}

We simulate $U=16$ single-antenna user equipments (UEs) transmitting 16-QAM data to a $B=64$ antennas basestation (BS) arranged in a uniform linear array.
The UEs utilize a non-ideal power control, staying within $\pm 3$\,dB from the average power $\Es$.  
We use a QuaDRiGa mmMagic channel model~\cite{jaeckel2019quadriga} with $1^\circ$ minimum angular separation between the UEs.
The UEs first transmit orthogonal pilots from which the BS performs conventional least-squares channel estimation.
The BS then estimates the transmitted data symbols with a linear minimum mean-square error (LMMSE) data detector.
We adopt the following discrete-time frequency-flat system~model:
\be \label{eq:mimo_inout}
	\bmy = (g \circ f)(\bmz) \quad \text{with} \quad \bmz= \bH \bms + \bmn .
\ee
Here, $\bmy\in\complexset^B$ is the BS receive vector resulting from the composition of the functions $f$ and $g$ applied to the ideal receive vector $\bmz\in\complexset^B$.
The function $f$ is the TF of the ADCs, including quantization, mismatch, and clipping, applied element-wise on the real and imaginary part separately, whereas the function $g$ is the chosen correction function, which is the identity if no mismatch mitigation is applied and an affine function if affine mitigation is used. The SAR-ADC mismatches applied by the function $f$ are characterized by the standard deviation $\sigma_m$ of the impact of the MSB mismatch, but mismatches are applied to all capacitors (see~\cite{Guichemerre23} for more details).
$\bH\in\complexset^{B\times U}$ is the effective channel matrix including power control, $\bms\in\setS^U$ contains the data symbols transmitted by the UEs, and $\bmn\in\complexset^B$ contains realizations of i.i.d.\ circularly-symmetric complex Gaussians with variance $\No$ modeling thermal noise.
We define the average receive signal-to-noise ratio (SNR) as follows:
\be	\label{eq:snr_def}
	\textit{SNR} \define \frac{\frobnorm{\bH}^2E_s}{BN_0}.
\ee

If affine compensation is used, then the offset and gain as defined by the max-SDR decomposition are respectively subtracted and divided per ADC element.
For each mismatch realization, the parameters of~\fref{eq:max_B} are estimated by applying Gaussian inputs to the ADC array.

\begin{figure}[tp] 
	\centering
	\includegraphics[width=0.32\textwidth]{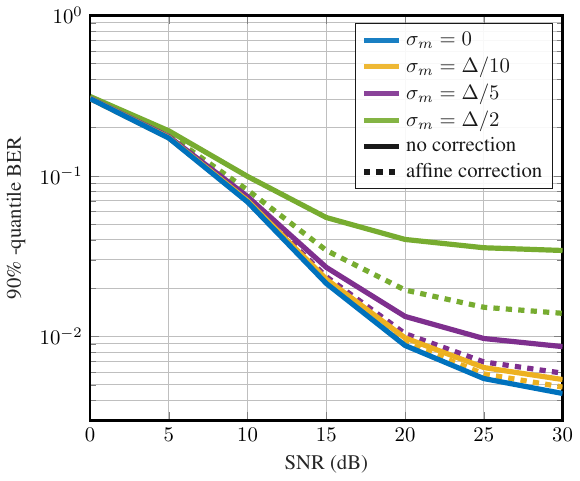}
	\vspace{-0.3cm}    
	\caption{Massive MU-MIMO simulation with mismatched 4b SAR ADC.}
	\label{fig:mimo_fig}
\end{figure}

\subsection{Simulation Results}\label{sec:sim_results}

We simulate the system described above with $4$\,bit mismatched SAR ADCs, with and without affine mismatch compensation, for different MSB mismatch standard deviation~$\sigma_m$.
We express values taken by $\sigma_m$ as a fraction of the size of an LSB-step $\Delta$.
Note that $\Delta=0.125$ in the case of a $4$\,bit ADC with input range $[-1,1]$.

As pointed out in \fref{sec:comp}, we need to report a quantile of the BER because mismatches are fixed once a chip is manufactured.
We choose the 90\%-quantile of the uncoded bit error rate (BER) as a metric to assess system performance.
We report the 90\%-quantile because a lower BER is better than a high one.
The chosen measure therefore corresponds to the BER requirement that 90\% of the produced chips will meet.

From \fref{fig:mimo_fig}, we see that an affine compensation greatly improves the BER when the mismatches would otherwise drastically degrade system performance (i.e.,~when $\sigma_m$ is large).
For example, at $\sigma_m=\Delta/2$, the uncompensated MIMO system has a BER floor greater of $3.5\times 10^{-2}$, but an affine compensation allows to half this floor to $1.4\times 10^{-3}$ where forward error-correction can be effective.
Note that we observe similar behavior independent of the system size:
by keeping a constant ratio between $B$ and $U$, the BER-quantile results stay virtually the same, whereas a different ratio changes the overall shapes of the curves but not the impact of affine compensation.


\section{Conclusions}

We have analyzed the impact of capacitor mismatches in widely-used successive approximation (SAR) ADCs.
To this end, we have proposed a novel affine signal decomposition for non-linear functions, the max-SDR decomposition, that maximizes the SDR---we have also generalized Bussgang's decomposition to an affine model.
After investigating the commonalities and differences between the max-SDR and Bussgang decompositions, we have applied them to the example of an ideal quantizer.
We then have extended the work of~\cite{Guichemerre23} on SAR-ADC mismatch modeling and showed the superiority of affine modeling over linear modeling as well as the efficiency of affine-model-based mitigation of SAR-ADC capacitor mismatches.
Finally, we have simulated a mmWave massive MU-MIMO uplink system and demonstrated that an affine correction can greatly improve the error rates.
Our results have shown that the design requirements of the analog front-ends can be relaxed by applying suitable compensation schemes, enabling savings in both power and silicon area.

\balance


\bibliographystyle{IEEEtran}
\bibliography{}

\begin{thebibliography}{10}
\providecommand{\url}[1]{#1}
\csname url@samestyle\endcsname
\providecommand{\newblock}{\relax}
\providecommand{\bibinfo}[2]{#2}
\providecommand{\BIBentrySTDinterwordspacing}{\spaceskip=0pt\relax}
\providecommand{\BIBentryALTinterwordstretchfactor}{4}
\providecommand{\BIBentryALTinterwordspacing}{\spaceskip=\fontdimen2\font plus
\BIBentryALTinterwordstretchfactor\fontdimen3\font minus
  \fontdimen4\font\relax}
\providecommand{\BIBforeignlanguage}[2]{{%
\expandafter\ifx\csname l@#1\endcsname\relax
\typeout{** WARNING: IEEEtran.bst: No hyphenation pattern has been}%
\typeout{** loaded for the language `#1'. Using the pattern for}%
\typeout{** the default language instead.}%
\else
\language=\csname l@#1\endcsname
\fi
#2}}
\providecommand{\BIBdecl}{\relax}
\BIBdecl

\bibitem{Rappaport13}
T.~S. Rappaport, S.~Sun, R.~Mayzus, H.~Zhao, Y.~Azar, K.~Wang, G.~N. Wong,
  J.~K. Schulz, M.~Samimi, and F.~Gutierrez, ``Millimeter wave mobile
  communications for {5G} cellular: It will work!'' \emph{IEEE Access}, vol.~1,
  pp. 335--349, May 2013.

\bibitem{Swindlehurst14}
A.~L. Swindlehurst, E.~Ayanoglu, P.~Heydari, and F.~Capolino, ``Millimeter-wave
  massive {MIMO}: the next wireless revolution?'' \emph{{IEEE} Commun. Mag.},
  vol.~52, no.~9, pp. 56--62, Sep. 2014.

\bibitem{Lu14}
L.~Lu, G.~Y. Li, A.~L. Swindlehurst, A.~Ashikhmin, and R.~Zhang, ``An overview
  of massive {MIMO}: Benefits and challenges,'' \emph{{IEEE} J. Sel. Topics
  Signal Process.}, vol.~8, no.~5, pp. 742--758, Apr. 2014.

\bibitem{Bjornsso17}
E.~Björnson, J.~Hoydis, and L.~Sanguinetti, \emph{Massive {MIMO} Networks:
  Spectral, Energy, and Hardware Efficiency}.\hskip 1em plus 0.5em minus
  0.4em\relax Now Publishers, 2017, vol.~11, no. 3-4, ser. Fundations and
  Trends in Signal Processing.

\bibitem{Jacobsson17}
S.~Jacobsson, G.~Durisi, M.~Coldrey, U.~Gustavsson, and C.~Studer, ``Throughput
  analysis of massive {MIMO} uplink with low-resolution {ADCs},'' \emph{{IEEE}
  Trans. Wireless Commun.}, vol.~16, no.~6, pp. 4038--4051, Apr. 2017.

\bibitem{Mollen17}
C.~Mollén, J.~Choi, E.~G. Larsson, and R.~W. Heath, ``Uplink performance of
  wideband massive {MIMO} with one-bit {ADCs},'' \emph{{IEEE} Trans. Wireless
  Commun.}, vol.~16, no.~1, pp. 87--100, Oct. 2017.

\bibitem{Oscar20}
O.~Casta\~{n}eda, S.~Jacobsson, G.~Durisi, T.~Goldstein, and C.~Studer,
  ``Finite-alphabet {MMSE} equalization for all-digital massive {MU}-{MIMO}
  {mmWave} communication,'' \emph{{IEEE} J. Sel. Areas Commun.}, vol.~38,
  no.~9, pp. 2128--2141, Jun. 2020.

\bibitem{Mezghani10}
A.~Mezghani, F.~Antreich, and J.~A. Nossek, ``Multiple parameter estimation
  with quantized channel output,'' in \emph{Proc. Int. ITG Workshop Smart
  Antennas (WSA)}, Feb. 2010, pp. 143--150.

\bibitem{Liang16}
N.~Liang and W.~Zhang, ``Mixed-{ADC} massive {MIMO},'' \emph{{IEEE} J. Sel.
  Areas Commun.}, vol.~34, no.~4, pp. 983--997, Mar. 2016.

\bibitem{Guichemerre23}
J.~Guichemerre and C.~Studer, ``The impact of {SAR}-{ADC} mismatch on quantized
  massive {MU-MIMO} systems,'' in \emph{Proc. 57th Asilomar Conf. Signals,
  Syst., Comput.}, Oct. 2023.

\bibitem{Suarez74}
R.~Suarez, P.~Gray, and D.~Hodges, ``{An all-MOS charge-redistribution A/D
  conversion technique},'' in \emph{IEEE Int. Solid-State Circuits Conf.
  (ISSCC)}, vol. XVII, Feb. 1974, pp. 194--195.

\bibitem{adc_survey}
{B.~Murmann}, ``{ADC Performance Survey 1997-2023},'' [Online]. Available:
  \url{https://github.com/bmurmann/ADC-survey}.

\bibitem{Bussgang52}
J.~J. Bussgang, ``Crosscorrelation functions of amplitude-distorted {Gaussian}
  signals,'' Massachusetts Inst. Technol., Cambridge, MA, USA, Tech. Rep., Mar.
  1952.

\bibitem{Stein81}
\BIBentryALTinterwordspacing
C.~M. Stein, ``Estimation of the mean of a multivariate normal distribution,''
  \emph{Ann. Statist.}, vol.~9, no.~6, pp. 1135--1151, Nov. 1981. [Online].
  Available: \url{http://www.jstor.org/stable/2240405}
\BIBentrySTDinterwordspacing

\bibitem{Walden99}
R.~Walden, ``Analog-to-digital converter survey and analysis,'' \emph{{IEEE} J.
  Sel. Areas Commun.}, vol.~17, no.~4, pp. 539--550, Apr. 1999.

\bibitem{Chan11}
T.~Chan~Carusone, D.~Johns, and K.~Martin, \emph{Analog Integrated Circuit
  Design, 2nd Edition}.\hskip 1em plus 0.5em minus 0.4em\relax Wiley, 2011.

\bibitem{Plassche94}
R.~J. van~der Plassche, \emph{Integrated Analog-To-Digital and
  Digital-To-Analog Converters}.\hskip 1em plus 0.5em minus 0.4em\relax Kluwer
  Academic Publischers, 1994.

\bibitem{Pelgrom89}
M.~Pelgrom, A.~Duinmaijer, and A.~Welbers, ``Matching properties of {MOS}
  transistors,'' \emph{{IEEE} J. Solid-State Circuits}, vol.~24, no.~5, pp.
  1433--1439, Oct. 1989.

\bibitem{jaeckel2019quadriga}
S.~Jaeckel, L.~Raschkowski, K.~Boerner, L.~Thiele, F.~Burkhardt, and
  E.~Eberlein, ``{QuaDRiGa} - quasi deterministic radio channel generator, user
  manual and documentation,'' Fraunhofer Heinrich Hertz Inst., Tech. Rep.
  v2.2.0, Jun. 2019.

\end{thebibliography}

\balance

\end{document}